\newcommand{\D}{\textsf{D}}
\newcommand{\B}{\textsf{B}}
\newcommand{\cL}{\textsf{L}}
\newcommand{\cF}{\textsf{F}}
\newcommand{\mBWT}{\ensuremath{\mathsf{BWT}}}
\newcommand{\mC}{\ensuremath{\mathsf{C}}}
\def\rank{\mbox{\rm {\sf rank}}}
\def\select{\mbox{\rm {\sf select}}}
\def\LF{\mbox{\rm {\sf LF}}}
\def\SA{\mbox{\rm {\sf SA}}}
\def\BWT{\mbox{\rm {\sf BWT}}}
\def\P{\mbox{\rm {\sf P}}}
\def\D{\mbox{\rm {\sf D}}}
\def\B{\mbox{\rm {\sf B}}}
\def\W{\mbox{\rm {\sf W}}}
\newcommand{\ours}{\texttt{PFP-FM}}
\title{Acceleration of FM-index Queries Through Prefix-free Parsing}
\author{Aaron Hong\footnote{Corresponding author}}{Department of Computer and Information Science and Engineering,  
		Herbert Wertheim College of Engineering,
		University of Florida,	Gainesville, FL, USA}{hong.yuc@ufl.edu}{[orcid]}{NIH/NHGRI grant R01HG011392 to Ben Langmead,  NSF/BIO grant DBI-2029552 to Christina Boucher}
\author{Marco Oliva}{Department of Computer and Information Science and Engineering,  
		Herbert Wertheim College of Engineering,
		University of Florida,	Gainesville, FL, USA}{marco.oliva@ufl.edu}{0000-0003-0525-3114}{NIH/NHGRI grant R01HG011392 to Ben Langmead, NSF/BIO grant DBI-2029552 to Christina Boucher}
\author{Dominik K\"{o}ppl}{Institut f\"{u}r Informatik der Univerist\"{a}t M\"{u}nster, 48149 M\"{u}nster, Germany}{koeppl.dsc@tmd.ac.jp}{0000-0002-7922-0369}{JSPS KAKENHI Grant Number JP21K17701, JP22H03551, and JP23H04378}
\author{Hideo~Bannai}{M\&D Data Science Center, Tokyo Medical and Dental University, Tokyo, Japan}{hdbn.dsc@tmd.ac.jp}{0000-0002-6856-5185}{JSPS KAKENHI Grant Number JP20H04141} 
\author{Christina Boucher}{Department of Computer and Information Science and Engineering,  
		Herbert Wertheim College of Engineering,
		University of Florida,	Gainesville, FL, USA}{christinaboucher@ufl.edu}{0000-0001-9509-9725}{NIH/NHGRI grant R01HG011392 to Ben Langmead,  NSF/BIO grant DBI-2029552 to Christina Boucher}
\author{Travis Gagie}{Faculty of Computer Science, Dalhousie University, Halifax, NS, Canada}{travis.gagie@gmail.com}{0000-0003-3689-327X}{NIH/NHGRI grant R01HG011392 to Ben Langmead, NSERC grant RGPIN-07185-2020 to Travis Gagie, NSF/BIO grant DBI-2029552 to Christina Boucher}
\authorrunning{A.\ Hong et al.}
\keywords{FM-index, pangenomics, scalability, word-based indexing, random access}
\begin{document}

\maketitle

\begin{abstract}
FM-indexes are a crucial data structure in DNA alignment, for example, but searching with them usually takes at least one random access per character in the query pattern.  Ferragina and Fischer~\cite{ferragina2007suffix} observed in 2007 that word-based indexes often use fewer random accesses than character-based indexes, and thus support faster searches.  Since DNA lacks natural word-boundaries, however, it is necessary to parse it somehow before applying word-based FM-indexing.  Last year, Deng et al.~\cite{deng2022fm} proposed parsing genomic data by induced suffix sorting, and showed the resulting word-based FM-indexes support faster counting queries than standard FM-indexes when patterns are a few thousand characters or longer.  In this paper we show that using prefix-free parsing---which takes parameters that let us tune the average length of the phrases---instead of induced suffix sorting, gives a significant speedup for patterns of only a few hundred characters.  We implement our method and demonstrate it is between 3 and 18 times faster than competing methods on queries to GRCh38.  And was consistently faster on queries made to 25,000, 50,000 and 100,000 SARS-CoV-2 genomes. Hence, it is very
clear that our method accelerates the performance of count over all state-of-the-art methods with a minor increase in the memory.
The source code for $\ours$ is available at \url{https://github.com/marco-oliva/afm}.
\end{abstract}  

\section{Introduction}
\label{sec:intro}

The FM-index \cite{ferragina_indexing_2005} is one of the most famous data structures in bioinformatics as it has been applied to countless applications in the analysis of biological data. Due to the long-term impact of this data structure, Burrows, Ferragina, and Manzini earned the 2022 ACM Paris Kanellakis Theory and Practice Award\footnote{https://awards.acm.org/kanellakis}.  It is the data structure behind important read aligners --- e.g., Bowtie \cite{langmead_ultrafast_2009} and BWA \cite{bwa} --- which take one or more reference genomes and build the FM-index for these genomes and use the resulting index to find short exact alignments between a set of reads and the reference(s) which then can be extended to approximate matches  \cite{langmead_ultrafast_2009,bwa}.  Briefly, the FM-index consists of a sample of the suffix array (denoted as \SA{}) and the Burrows--Wheeler transform (\BWT{}) array. Given an input string $S$ and a query pattern $Q$, {\tt count} queries that answer the number of times the longest match of $Q$ appears in $S$, can be efficiently supported using the \BWT{}.  To locate all of these occurrences the \SA{} sample is needed.  Hence, together the FM-index efficiently supports both {\tt count} and {\tt locate} queries.  We mathematically define the \SA{} and \BWT{} in the next section.

There has been a plethora of research papers on reducing the size of the FM-index (see, e.g.,~\cite{makinen2005succinct,gog2019fixed,gagie2020fully}) and on speeding up queries.  The basic query, {\tt count}, returns the number of times a pattern $Q$ appears in the indexed text $S$, but usually requires at least $|Q|$ random accesses to the \BWT{} of $S$, which are usually much slower than the subsequent computations we perform on the information those accesses return.  More specifically, a {\tt count} query for $Q$ use $\rank$ queries at $|Q|$ positions in the \BWT{}; if we answer these using a single wavelet tree for the whole \BWT{}, then we may use a random access for every level we descend in the wavelet tree, or $\Omega (|Q| \log \sigma)$ random access in all, where $\sigma$ is the size of the alphabet; if we break the \BWT{} into blocks and use a separate wavelet tree for each block~\cite{gog2019fixed}, we may need only one or a few random accesses per $\rank$ query, but the total number of random accesses is still likely to be $\Omega (|Q|)$.  As far back as 2007, Ferragina and Fischer~\cite{ferragina2007suffix} addressed compressed indexes' reliance on random access and demonstrated that word-based indexes perform fewer random accesses than character-based indexes:
{\it
``The space reduction of the final word-based suffix array impacts also in their query time (i.e. less random access binary-search steps!), being faster by a factor of up to 3.''
}

Thus, one possibility of accelerating the random access to genomic data---where it is widely used---is to break up the sequences into words or phrases.  In light of this insight, Deng et al. \cite{deng2022fm} in 2022 used the Longest Matching Suffix (LMS) factorization~\cite{daykin2013linear}  to break an input sequence $S$ into phrases. Unfortunately, one round of LMS parsing leads to phrases that are generally too short, so they obtained speedup only when $Q$ was thousands of characters.  The open problem was how to control the length of phrases with respect to the input to get longer phrases that would enable larger advances in the acceleration of the random access. 

Here, we apply the concept of prefix-free parsing to the problem of accelerating {\tt count} in the FM-index.  
Prefix-free parsing uses a rolling hash to first select trigger strings that are then used to define a parse of the input string $S$; i.e., the prefix-free parse is a parsing of $S$ into phrases that begin and end at trigger strings and contain no other trigger string.  All unique phrases are lexicographically sorted and stored in the dictionary of the prefix-free parse, which we denote as $\D$.  The prefix-free parse can be stored as an ordered list of the phrases' ranks in $\D$.   Hence, prefix-free parsing breaks up the input sequence into phrases, whose size is more controllable by the selection of the trigger strings. This leads to a more flexible acceleration than Deng et al.~\cite{deng2022fm} obtained.  

Now suppose we build an FM-index for $S$, an FM-index for the parse $\P$, and a bitvector $\B$ of length $n$ with 1s marking characters in the \BWT{} of $S$ that immediately precede phrase boundaries in $S$, i.e., that immediately precede a trigger string. We note that all the 1s are bunched into at most as many runs as there are distinct trigger strings in $S$.  Also, as long as the ranks of the phrases are in the same lexicographic order as the phrases themselves, we can use the bitvector to map from the interval in the \BWT{} of $S$ for any pattern starting with a trigger string to the corresponding interval in the \BWT{} of $\P$, and vice versa. This means that, given a query pattern $Q$, we can backward search for $Q$ character by character in the FM-index for $S$ until we hit the left end of the rightmost trigger string in $Q$, then map into the \BWT{} of $\P$ and backward search for $Q$ phrase by phrase until we hit the left end of the leftmost trigger string in $Q$, then map back into the \BWT{} of $S$ and finish backward searching character by characters again.

We implement this method, which we refer to as $\ours$, and extensively compare against the FM-index implementation in {\tt sdsl} \cite{gbmp2014sea}, {\tt RLCSA}~\cite{siren2009compressed}, {\tt RLFM}~\cite{makinen2005succinct,makinen2004run}, and {\tt FIGISS} \cite{deng2022fm} using sets of SARS-CoV-2 genomes taken from the NCBI website, and the Genome Reference Consortium Human Build 38 with varying query string lengths. When we compare $\ours$ to FM-index in {\tt sdsl} using 100,000 SARS-CoV-2 genomes, we witnessed that $\ours$ was able to perform between 2.1 and 2.8 more queries.   In addition, $\ours$ was between 64.38\% and 74.12\%, 59.22\% and 78.23\%, and 49.10\% and 90.70\% faster than {\tt FIGISS},  {\tt RLCSA}, and {\tt RLFM}, respectively on 100,000 SARS-CoV-2 genomes. We evaluated the performance of $\ours$ on the Genome Reference Consortium Human Build 38, and witnessed that it was between  3.86 and 7.07, 2.92 and 18.07, and 10.14 and 25.46 times faster than {\tt RLCSA}, {\tt RLFM}, and {\tt FIGISS}, respectively. With respect to construction time, $\ours$ had the most efficient construction time for all SARS-CoV-2 datasets and was the second fastest for Genome Reference Consortium Human Build 38. All methods used less than 60 GB for memory for construction on the SARS-CoV-2 datasets, making the construction feasible on any entry level commodity server---even the build for the 100,000 SARS-CoV-2 dataset. Construction for the Genome Reference Consortium Human Build 38 required between 26 GB and 71 GB for all methods, with our method using the most memory.  In summary, we develop and implement a method for accelerating the FM-index, and achieve an acceleration between 2 and 25 times, with the greatest acceleration witnessed with longer patterns.  Thus, accelerated FM-index methods---such as the one developed in this paper---are highly applicable to finding very long matches (125 to 1,000 in length) between query sequences and reference databases.  As reads get longer and more accurate (i.e., Nanopore data), we will soon be prepared align long reads to reference databases with efficiency that surpasses traditional FM-index based alignment methods.  The source code is publicly available at \url{https://github.com/marco-oliva/afm}.
\section{Preliminaries} \label{sec:prelimi}

\subsection{Basic Definitions} A string $S$ of length $n$ is a finite sequence of symbols $S = S[0..n-1] = S[0] \cdots S[n-1]$ over an alphabet $\Sigma = \{c_1, \ldots , c_{\sigma} \}$. We assume that the symbols can be unambiguously ordered.  We denote by $\varepsilon$ the empty string, and the length of $S$ as $|S|$.  Given a string $S$, we denote the reverse of $S$ as $rev(S)$, i.e., $rev(S) = S[n-1] \cdots S[0]$.

We denote by $S[i..j]$ the substring $S[i] \cdots  S[j]$ of $S$ starting in position $i$ and ending in position $j$, with $S[i .. j] = \varepsilon$ if $i > j$. For a string $S$ and $0 \leq i < n$, $S[0 .. i]$ is called the $i$-th prefix of $S$, and $S[i..n-1]$ is called the $i$-th suffix of $S$. We call a prefix $S[0 .. i]$ of $S$ a {\em proper prefix} if $0 \leq i < n-1$.  Similarly, we call a suffix $S[i .. n-1]$ of $S$ a {\em proper suffix} if $0 < i < n$.

Given a string $S$, a symbol $c \in \Sigma$, and an integer $i$, we define $S.\rank_c(i)$ (or simply {\tt rank} if the context is clear) as the number of occurrences of $c$ in $S[0..i-1]$. 
We also define $S.\select_c(i)$ as 
$\min(\{ j-1 \mid S.\rank_c(j) = i \}\cup\{ n \})$,
i.e., the position in $S$ of the $i$-th occurrence of $c$ in $S$ if it exists, and $n$ otherwise.  For a bitvector $B[0..n-1]$, that is a string over $\Sigma = \{0,1\}$, to ease the notation we will refer to $B.\rank_1(i)$ and $B.\select_1(i)$ as $B.\rank(i)$ and $B.\select(i)$, respectively.

\subsection{SA, BWT, and Backward Search}

We denote the {\em suffix array}~\cite{mm1993} of a given a string $S[0..n-1]$ as $\SA_S$, and define it to be the permutation of $\{0,\ldots,n-1\}$ such that $S[\SA_S[i]..n-1]$ is the $i$-th lexicographical smallest suffix of $S$.  We refer to $\SA_S$ as $\SA$  when it is clear from the context.
For technical reasons, we assume that the last symbol of the input string is $S[n-1] = \$$, which does not occur anywhere else in the string and is smaller than any other symbol.

We consider the matrix $\W$ containing all sorted rotations of $S$, called the \BWT matrix of $S$, and let \cF{} and \cL{} be the first and the last column of the matrix.  
The last column defines the \BWT{} array, i.e., $\BWT = \cL$.  
Now let $\mC[c]$ be the number of suffixes starting with a character smaller than $c$. We define the \LF{}-mapping as  $\LF(i,c) = \mC[c] + \rank_c(\mBWT,i)$  and  $\LF(i) = \LF(i,\mBWT[i])$. 
With the \LF{}-mapping, it is possible to reconstruct the string $S$ from its \BWT{}.  It is in fact sufficient to set an iterator $s = 0$ and $S[n-1] = \$$ and for each $i = n-2, \ldots, 0$ do $S[i] = \BWT[s]$ and $s = \LF(s)$. The \LF{}-mapping can also be used to support {\tt count} by performing the backward search, which we now describe.

Given a query pattern $Q$ of length $m$, the \emph{backward search} algorithm consists of $m$ steps that preserve the following invariant: at the $i$-th step, $p$ stores the position of the first row of $\W$ prefixed by $Q[i,m]$ while $q$ stores the position of the last row of $\W$ prefixed by $Q[i,m]$.  To advance from $i$ to $i-1$, we use the \LF{}-mapping on $p$ and $q$, $p = \mC[c] + \mBWT.\rank_c(p)$ and $q = \mC[c] + \BWT{}.\rank_c(q + 1) - 1$.

\subsection{FM-index and count Queries}

Given a query string $Q[0..m-1]$ and an input string $S[0..n-1]$, two fundamental queries are: (1) {\tt count} which counts the number of of occurrences of $Q$ in $S$; (2) {\tt locate} which finds the location of each of these matches in $S$.  Ferragina and Manzini \cite{ferragina_indexing_2005} showed that, by combining $\SA$ with the \BWT{}, both {\tt count} and {\tt locate} can be efficiently supported.  Briefly, backward search on the \BWT{} is used to find the lexicographical range of the occurrences of $Q$ in $S$; the size of this range is equal to {\tt count}.  The \SA{} positions within this range are the positions where these occurrences are in $S$.

\subsection{Prefix-Free Parsing} 

As we previously mentioned, the {\em Prefix-Free Parsing} (PFP) takes as input a string $S[0..n-1]$, and positive integers $w$ and $p$, and produces a parse of $S$ (denoted as $\P$) and a dictionary (denoted as $\D$) of all the unique substrings (or phrases) of the parse. We briefly go over the algorithm for producing this dictionary and parse.  First, we let $T$ be an arbitrary set of $w$-length strings over $\Sigma$
and call it the set of {\em trigger strings}. 
As mentioned before, we assume that $S[n-1]=\$$ and consider $S$ to be cyclic, i.e., for all $i$, $S[i] = S[i \bmod n]$. Furthermore, we assume that 
$\$S[0..w-2]=S[n-1..n+w-2] \in T$, i.e., the substring of length $w$ that begins with $\$$ is a trigger string.

We let the dictionary $\D = \{d_1,.., d_{\big|\D\big|}\}$ be a (lexicographically sorted) maximum set of substrings of $S$ such that the following holds for each $d_i$: i) exactly one proper prefix of $d_i$ is contained in $T$, ii) exactly one proper suffix of $d_i$ is contained in $T$, iii) and no other substring of $d_i$ is in $T$.   
An important consequence of the definition is that $\D$ is prefix-free, i.e., for any $i \neq j$, $d_i$ cannot be a prefix of $d_j$.

Since we assumed $S[n-1..n+w-2] \in T$, we can construct $\D$ by scanning $S' = \$S[0..n-2]S[n-1..n+w-2]$ to find all occurrences of $T$ and adding to $\D$ each substring of $S'$ that starts and ends at a trigger string being inclusive of the starting and ending trigger string.
We can also construct the list of occurrences of $\D$ in $S'$, which defines the parse $\P$.

We choose $T$ by a Karp-Rabin fingerprint $f$ of strings of length $w$.
We slide a window of length $w$ over $S'$, and for each length $w$ substring $r$ of $S'$,
include $r$ in $T$ if and only if $f(r) \equiv 0\pmod p$ or $r=S[n-1..n+w-2]$.
Let $0=s_0 < \cdots < s_{k-1}$ be the positions in $S'$ such that for any $0 \le i < k$, $S'[s_i ..s_i+w-1] \in T$.
The dictionary is $\D = \{ S'[s_i..s_{i+1}+w-1]\mid i=0,\ldots, k-1\}$,
and the parse is defined to be the sequence of lexicographic ranks in $\D$ of the substrings
$S'[s_0..s_{1}+w-1], \ldots, S'[s_{k-2}..s_{k-1}+w-1]$.

As an example, suppose we have, $S' = {\tt {\color{green}{\$A}}G{\color{red}AC}G{\color{red}AC}T\#AGAT{\color{red}AC}T\#AGAT{\color{blue}TC}GAG{\color{red}AC}G{\color{red}AC}{\color{green}\$A}}$, where the trigger strings are highlighted in red, blue, or green.  Then, assuming $\$ < \# < \tt{A} < \tt{C} < \tt{G} < \tt{T}$,
we have
$
\D= 
\{ \tt{\$AGAC}, \tt{AC\$A}, \tt{ACGAC}, \tt{ACT\#AGATAC}, \tt{ACT\#AGATTC}, \tt{TCGAGAC}\}
$
and
$\P= 0, 2, 3, 4, 5, 2, 1$.
\section{Methods}\label{sec:methods}

As we previously mentioned, we will use prefix-free parsing to build a word-based FM-index in a manner in which the length of the phrases can be controlled via the parameters $w$ and $p$.  To explain our data structure, we first describe the various components of our data structure, and then follow with describing how to support {\tt count} queries in a manner that is more efficient than the standard FM-index.

\subsection{Data Structure Design}

It is easiest to explain our two-level design with an example, so consider a text
\[S [0..n - 1] = \mathtt{TCCAGAAGAGTATCTCCTCGACATGTTGAAGACATATGAT\$}\]
of length $n = 41$ that is terminated by a special end-of-string character {\tt \$} lexicographically less than the rest of the alphabet.  Suppose we parse $S$ using $w = 2$ and a Karp-Rabin hash function such that the normal trigger strings occurring in $S$ are {\tt AA}, {\tt CG} and {\tt TA}.  We consider $S$ as cyclic, and we have  $\$ S[0..w - 2] = \mathtt{\$ T}$ as a special trigger string, so the the dictionary $\D$ is
\[\D [0..5] = \{\mathtt{\$TCCAGAA}, \mathtt{AAGACATA}, \mathtt{AAGAGTA}, \mathtt{CGACATGTTGAA}, \mathtt{TATCTCCTCG}, \mathtt{TATGAT\$T}\}\,,\]
with the phrases sorted in lexicographic order.  (Recall that phrases consecutive in $S$ overlap by $w = 2$ characters.)  If we start parsing at the {\tt \$}, then the prefix-free parse for $S$ is
\[\P [0..5] = (0, 2, 4, 3, 1, 5)\,,\]
where each element (or phrase ID) in $\P$ is the lexicographic rank of the phrase in $\D$.

Next, we consider the \BWT{} matrix for $\P$.  Figure~\ref{fig:P_matrix} illustrates the \BWT{} matrix of \P{} for our example. We note that since there is only one {\tt \$} in $S$, it follows that there is only one 0 in $\P$; 
we can regard this $0$ as the end-of-string character for (a suitable rotation of) $P$ corresponding to {\tt \$} in $S$.
If we take the $i$-th row of this matrix and replace the phrase IDs by the phrases themselves, collapsing overlaps, then we get the lexicographically $i$-th cyclic shift of $S$ that start with a trigger string, as shown on the right of the figure.  
This is one of the key insights that we will use later on.
\begin{lemma}\label{lemma:lexordercorrespondence}
The lexicographic order of rotations of $\P$ correspond to the lexicographic order of their corresponding rotations of $S$.
\end{lemma}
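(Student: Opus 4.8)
The plan is to exploit the structure of the parse: each rotation of $\P$ that begins with a phrase ID corresponds, after expanding phrase IDs into phrases and collapsing the length-$w$ overlaps, to a rotation of $S$ beginning at a trigger-string occurrence, and this correspondence is a bijection between the rotations of $\P$ and the rotations of $S$ that start at a trigger string. So the real content is: if $\P_i$ and $\P_j$ are two rotations of $\P$ (viewed as sequences of phrase IDs) and $S_i, S_j$ are their expansions (rotations of $S$), then $\P_i <_{\lex} \P_j \iff S_i <_{\lex} S_j$, where comparison of ID-sequences uses the order on IDs, which by construction is the lexicographic order of the phrases in $\D$.

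First I would set up notation: write $\P_i = (a_0, a_1, \dots)$ and $\P_j = (b_0, b_1, \dots)$ for the two rotations, with $a_t, b_t$ phrase IDs, and let $d_{a_t}, d_{b_t} \in \D$ be the corresponding phrases. Let $\ell \ge 0$ be the first index where $a_\ell \ne b_\ell$ (if no such index exists the rotations are equal and so are their expansions, giving both directions trivially). For $t < \ell$ we have $d_{a_t} = d_{b_t}$, so the expansions $S_i$ and $S_j$ agree on the prefix obtained by concatenating $d_{a_0}, \dots, d_{a_{\ell-1}}$ with overlaps collapsed; call its length $L$. The claim then reduces to comparing $S_i$ and $S_j$ from position $L$ onward, where $S_i$ continues with $d_{a_\ell}$ (minus its shared length-$w$ prefix with $d_{a_{\ell-1}}$) and $S_j$ continues with $d_{b_\ell}$ similarly. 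Since $d_{a_{\ell-1}} = d_{b_{\ell-1}}$, both $d_{a_\ell}$ and $d_{b_\ell}$ share the \emph{same} length-$w$ prefix (the trigger string ending phrase $\ell-1$), so after stripping that common prefix, $S_i$ from position $L$ begins with $d_{a_\ell}$'s remaining suffix and $S_j$ with $d_{b_\ell}$'s remaining suffix.

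The key step is then: $d_{a_\ell} <_{\lex} d_{b_\ell} \iff$ (the expansion tail starting with $d_{a_\ell}$) $<_{\lex}$ (the expansion tail starting with $d_{b_\ell}$). The non-trivial direction is that the phrase comparison already ``decides'' the tail comparison, i.e.\ we never need to look past the first differing phrase. This is exactly where prefix-freeness of $\D$ does the work: if $d_{a_\ell}$ and $d_{b_\ell}$ differ, neither is a prefix of the other, so they first disagree at some position $k$ that lies \emph{within both strings}; that disagreement at position $k$ survives into the expansions (expansion only appends more characters after $d_{a_\ell}$ and $d_{b_\ell}$, and those characters sit strictly after position $k$), and it has the same sign. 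Conversely, if $d_{a_\ell} = d_{b_\ell}$ then $a_\ell = b_\ell$ (IDs are in bijection with distinct phrases), contradicting the choice of $\ell$; so the $\iff$ is complete. I would also need the small boundary remark that $\P$, like $S$, has a unique minimal symbol (the phrase ID $0$, corresponding to the phrase containing $\$$), so ``rotation'' can be read as ``suffix of the doubled string starting before the $0$'' consistently on both sides, and no rotation is a proper prefix of another.

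The main obstacle is bookkeeping the length-$w$ overlaps carefully: one must be sure that when two consecutive phrases are expanded and glued, the $w$ characters shared between phrase $t$ and phrase $t+1$ are counted once, and that after the first $\ell$ agreeing phrases the ``current position'' $L$ in both expansions is identical and both subsequent phrases hang off the \emph{same} trigger string. Once that alignment is pinned down, prefix-freeness makes the comparison propagate cleanly and both directions of the equivalence follow; there is no genuinely hard estimate, only the need to be meticulous about indices and the cyclic-string conventions.
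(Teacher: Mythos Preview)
Your proof is correct and follows essentially the same approach as the paper: both rely on the fact that phrase IDs are assigned in the lexicographic order of the phrases and that $\D$ is prefix-free, so the first differing phrase decides the comparison of the expansions. The paper's proof states these two ingredients in two sentences without unpacking the index bookkeeping or the overlap handling; your version makes those details explicit, which is a strict improvement in rigor but not a different argument.
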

\begin{proof}
The characters of $\P$ are the phrase IDs that act as meta-characters.
Since the meta-characters inherit the lexicographic rank of their underlying characters, 
and due to the prefix-freeness of the phrases,
the suffix array of $\P$ permutes the meta-characters of $\P$ in the same way as the suffix array of $S$ the phrases of $S$.
    This means that the order of the phrases in the $\BWT$ of $S$ is the same as their corresponding phrase IDs in $\P$.
\end{proof}
Next, we let $B [0..n - 1]$ be a bitvector marking these cyclic shifts' lexicographic rank among all cyclic shifts of $S$, i.e., where they are among the rows of the \BWT{} matrix of $S$.  Figure~\ref{fig:T_matrix} shows the \SA{}, \BWT{} matrix and \BWT{} of $S$, together with $B$; we highlight the \BWT{}---the last column of the matrix---and the cyclic shifts from Figure~\ref{fig:P_matrix} in red.  We note that $B$ contains at most one run of 1's for each distinct trigger string in $S$ so it is usually highly run-length compressible in practice.

\begin{figure}[t!]
\begin{center}
\begin{tabular}{c@{\hspace{10ex}}c}
\begin{tabular}{rrrrrr}
0 & 2 & 4 & 3 & 1 & 5\\
1 & 5 & 0 & 2 & 4 & 3\\
2 & 4 & 3 & 1 & 5 & 0\\
3 & 1 & 5 & 0 & 2 & 4\\
4 & 3 & 1 & 5 & 0 & 2\\
5 & 0 & 2 & 4 & 3 & 1
\end{tabular}
&
\begin{tabular}{c}
\tt \$TCCAGAAGAGTATCTCCTCGACATGTTGAAGACATATGAT\\
\tt AAGACATATGAT\$TCCAGAAGAGTATCTCCTCGACATGTTG\\
\tt AAGAGTATCTCCTCGACATGTTGAAGACATATGAT\$TCCAG\\
\tt CGACATGTTGAAGACATATGAT\$TCCAGAAGAGTATCTCCT\\
\tt TATCTCCTCGACATGTTGAAGACATATGAT\$TCCAGAAGAG\\
\tt TATGAT\$TCCAGAAGAGTATCTCCTCGACATGTTGAAGACA
\end{tabular}
\end{tabular}
\caption{The \BWT{} matrix for our prefix-free parse $P$ {\bf (left)} and the cyclic shifts of $S$ that start with a trigger string {\bf (right)}, in lexicographic order.}
\label{fig:P_matrix}
\end{center}
\end{figure}

\begin{figure}[p!]
\begin{center}
\begin{tabular}{rrcc}
$i$ & $\SA [i]$ & $B [i]$ & $T [\SA [i]..(\SA [i] - 1) \bmod n]$ \hfill \textcolor{red}{$\BWT [i]$}\\
\hline
 0 & 40 & 1 & \tt \textcolor{red}{\$TCCAGAAGAGTATCTCCTCGACATGTTGAAGACATATGAT}\\
 1 & 28 & 1 & \tt \textcolor{red}{AAGACATATGAT\$TCCAGAAGAGTATCTCCTCGACATGTTG}\\
 2 &  5 & 1 & \tt \textcolor{red}{AAGAGTATCTCCTCGACATGTTGAAGACATATGAT\$TCCAG}\\
 3 & 31 & 0 & \tt ACATATGAT\$TCCAGAAGAGTATCTCCTCGACATGTTGAA\textcolor{red}{G}\\
 4 & 20 & 0 & \tt ACATGTTGAAGACATATGAT\$TCCAGAAGAGTATCTCCTC\textcolor{red}{G}\\
 5 &  3 & 0 & \tt AGAAGAGTATCTCCTCGACATGTTGAAGACATATGAT\$TC\textcolor{red}{C}\\
 6 & 29 & 0 & \tt AGACATATGAT\$TCCAGAAGAGTATCTCCTCGACATGTTG\textcolor{red}{A}\\
 7 &  6 & 0 & \tt AGAGTATCTCCTCGACATGTTGAAGACATATGAT\$TCCAG\textcolor{red}{A}\\
 8 &  8 & 0 & \tt AGTATCTCCTCGACATGTTGAAGACATATGAT\$TCCAGAA\textcolor{red}{G}\\
 9 & 38 & 0 & \tt AT\$TCCAGAAGAGTATCTCCTCGACATGTTGAAGACATAT\textcolor{red}{G}\\
10 & 33 & 0 & \tt ATATGAT\$TCCAGAAGAGTATCTCCTCGACATGTTGAAGA\textcolor{red}{C}\\
11 & 11 & 0 & \tt ATCTCCTCGACATGTTGAAGACATATGAT\$TCCAGAAGAG\textcolor{red}{T}\\
12 & 35 & 0 & \tt ATGAT\$TCCAGAAGAGTATCTCCTCGACATGTTGAAGACA\textcolor{red}{T}\\
13 & 22 & 0 & \tt ATGTTGAAGACATATGAT\$TCCAGAAGAGTATCTCCTCGA\textcolor{red}{C}\\
14 &  2 & 0 & \tt CAGAAGAGTATCTCCTCGACATGTTGAAGACATATGAT\$T\textcolor{red}{C}\\
15 & 32 & 0 & \tt CATATGAT\$TCCAGAAGAGTATCTCCTCGACATGTTGAAG\textcolor{red}{A}\\
16 & 21 & 0 & \tt CATGTTGAAGACATATGAT\$TCCAGAAGAGTATCTCCTCG\textcolor{red}{A}\\
17 &  1 & 0 & \tt CCAGAAGAGTATCTCCTCGACATGTTGAAGACATATGAT\$\textcolor{red}{T}\\
18 & 15 & 0 & \tt CCTCGACATGTTGAAGACATATGAT\$TCCAGAAGAGTATC\textcolor{red}{T}\\
19 & 18 & 1 & \tt \textcolor{red}{CGACATGTTGAAGACATATGAT\$TCCAGAAGAGTATCTCCT}\\
20 & 13 & 0 & \tt CTCCTCGACATGTTGAAGACATATGAT\$TCCAGAAGAGTA\textcolor{red}{T}\\
21 & 16 & 0 & \tt CTCGACATGTTGAAGACATATGAT\$TCCAGAAGAGTATCT\textcolor{red}{C}\\
22 & 27 & 0 & \tt GAAGACATATGAT\$TCCAGAAGAGTATCTCCTCGACATGT\textcolor{red}{T}\\
23 &  4 & 0 & \tt GAAGAGTATCTCCTCGACATGTTGAAGACATATGAT\$TCC\textcolor{red}{A}\\
24 & 30 & 0 & \tt GACATATGAT\$TCCAGAAGAGTATCTCCTCGACATGTTGA\textcolor{red}{A}\\
25 & 19 & 0 & \tt GACATGTTGAAGACATATGAT\$TCCAGAAGAGTATCTCCT\textcolor{red}{C}\\
26 &  7 & 0 & \tt GAGTATCTCCTCGACATGTTGAAGACATATGAT\$TCCAGA\textcolor{red}{A}\\
27 & 37 & 0 & \tt GAT\$TCCAGAAGAGTATCTCCTCGACATGTTGAAGACATA\textcolor{red}{T}\\
28 &  9 & 0 & \tt GTATCTCCTCGACATGTTGAAGACATATGAT\$TCCAGAAG\textcolor{red}{A}\\
29 & 24 & 0 & \tt GTTGAAGACATATGAT\$TCCAGAAGAGTATCTCCTCGACA\textcolor{red}{T}\\
30 & 39 & 0 & \tt T\$TCCAGAAGAGTATCTCCTCGACATGTTGAAGACATATG\textcolor{red}{A}\\
31 & 10 & 1 & \tt \textcolor{red}{TATCTCCTCGACATGTTGAAGACATATGAT\$TCCAGAAGAG}\\
32 & 34 & 1 & \tt \textcolor{red}{TATGAT\$TCCAGAAGAGTATCTCCTCGACATGTTGAAGACA}\\
33 &  0 & 0 & \tt TCCAGAAGAGTATCTCCTCGACATGTTGAAGACATATGAT\textcolor{red}{\$}\\
34 & 14 & 0 & \tt TCCTCGACATGTTGAAGACATATGAT\$TCCAGAAGAGTAT\textcolor{red}{C}\\
35 & 17 & 0 & \tt TCGACATGTTGAAGACATATGAT\$TCCAGAAGAGTATCTC\textcolor{red}{C}\\
36 & 12 & 0 & \tt TCTCCTCGACATGTTGAAGACATATGAT\$TCCAGAAGAGT\textcolor{red}{A}\\
37 & 26 & 0 & \tt TGAAGACATATGAT\$TCCAGAAGAGTATCTCCTCGACATG\textcolor{red}{T}\\
38 & 36 & 0 & \tt TGAT\$TCCAGAAGAGTATCTCCTCGACATGTTGAAGACAT\textcolor{red}{A}\\
39 & 23 & 0 & \tt TGTTGAAGACATATGAT\$TCCAGAAGAGTATCTCCTCGAC\textcolor{red}{A}\\
40 & 25 & 0 & \tt TTGAAGACATATGAT\$TCCAGAAGAGTATCTCCTCGACAT\textcolor{red}{G}
\end{tabular}
\caption{The SA, BWT matrix and BWT of $T$, together with the bitvector $B$ in which 1s indicate rows of the matrix starting with trigger strings.  The BWT is highlighted in red, as are the columns marked by 1s.}
\label{fig:T_matrix}
\end{center}
\end{figure}

In addition to the bitvector, we store a hash function $h$ on phrases and a map $M$ from the hashes of the phrases in $\D$ to those phrases' lexicographic ranks, which are their phrase IDs; $M$ returns NULL when given any other key.  Therefore, in total, we build the FM-index for $S$, the FM-index for $\P$, the bitvector $B$ marking the cyclic rotations, the hash function $h$ on the phrases and the map $M$. For our example, suppose

\[\begin{array}{c@{\hspace{10ex}}c}
\begin{array}{rcl}
h (\mathtt{\$TCCAGAA}) & = & 91785\\
h (\mathtt{AAGACATA}) & = & 34865\\
h (\mathtt{AAGAGTA}) & = & 49428\\
h (\mathtt{CGACATGTTGAA}) & = & 98759\\
h (\mathtt{TATCTCCTCG}) & = & 37298\\
h (\mathtt{TATGAT\$T}) & = & 68764
\end{array}
&
\begin{array}{rcl}
M (91785) & = & 0\\
M (34865) & = & 1\\
M (49428) & = & 2\\
M (98759) & = & 3\\
M (37298) & = & 4\\
M (68764) & = & 5
\end{array}
\end{array}\]
and $M (x) = \mathrm{NULL}$ for any other value of $x$.  

If we choose the range of $h$ to be reasonably large then we can still store $M$ in space proportional to the number of phrases in $D$ with a reasonably constant coefficient and evaluate $M (h (\cdot))$ in constant time with high probability, but the probability is negligible that $M (h (\gamma)) \neq \mathrm{NULL}$ for any particular string $\gamma$ not in $D$.  This means that in practice we can use $M (h (\cdot))$ as a membership dictionary for $D$, and not store $D$ itself.

\subsection{Query Support}

Next, given the data structure that we define above, we describe how to support {\tt count} queries for a given pattern $Q$. We begin by parsing $Q$ using the same Karp-Rabin hash we used to parse $S$, implying that we will have all the same trigger strings as we did before but possibly additional ones that did not occur in $S$.
However, we will not consider $Q$ to be cyclic nor assume an end-of-string symbol that would assure that $Q$ starts and ends with a trigger string.

If $Q$ is a substring of $S$, then, since $Q$ contains the same trigger strings as its corresponding occurrence in $S$,
the sequence of phrases induced by the trigger strings in
$Q$ must be a substring of the sequence of phrases of $S$.
Together with the prefix and suffix of $Q$ that are a suffix and prefix of the phrases in $S$ to the left and right of the shared phrases, we call this the partial encoding of $Q$, defined formally as follows.
\begin{definition}[partial encoding]
Given a substring $S[i..j]$ of $S$, the
{\em partial encoding} of $S[i..j]$ is defined as follows:
If no trigger string occurs in $S[i..j]$,
then the partial encoding of $S[i..j]$ is simply $S[i..j]$ itself.
Otherwise, the partial encoding of $S[i..j]$ is the concatenation of:
(1) the shortest prefix $\alpha$ of $S[i..j]$ that does not start with a trigger string and ends with a trigger string,
followed by 
(2) the sequence of phrase IDs of phrases completely contained in $S[i..j]$,
followed by 
(3) the shortest suffix $\beta$ of $S[i..j]$ that begins with a trigger string and does not end with a trigger string.
\end{definition}
So the partial encoding partitions $S[i..j]$ into a prefix $\alpha$, a list of phrase IDs, and a suffix $\beta$.
If $S[i..j]$ begins (respectively ends) with a trigger string, 
then $\alpha$ (respectively $\beta$) is the empty string.

Parsing $Q$ can be done in time linear in the length of $Q$. 

\begin{lemma}
We can represent $M$ with a data structure taking space (in words) proportional to the number of distinct phrases in $\D$.
Given a query pattern $Q$, this data structure returns NULL with high probability if $Q$ does not occur in $S$.
Otherwise ($Q$ occurs in $S$), it returns the partial encoding of $Q$.
In either case, this query takes $O(|Q|)$ time.
\end{lemma}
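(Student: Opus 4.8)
The plan is to realize $M$ as a static hash table and to answer the whole query with a single left-to-right scan of $Q$; three things then need to be checked — the $O(|\D|)$-word space bound, the $O(|Q|)$ time bound, and the correctness of the NULL / partial-encoding output. For the space bound I would store, for each of the distinct phrases $d \in \D$, only the pair consisting of the fingerprint $h(d)$ and the rank of $d$ in $\D$ (its phrase ID), hashed into a static dictionary such as a perfect-hash or cuckoo-hash table built on the keys $h(d)$; this occupies $O(|\D|)$ words and answers a ``is this key present, and if so what is its value'' query in constant time (worst case, or with high probability, depending on the variant). Note that we never store the phrases themselves. The price is that the table may also accept a key $h(\gamma)$ for some $\gamma \notin \D$; taking the range of $h$ to be $n^{c}$ for a suitable constant $c$, for any fixed string $\gamma \notin \D$ the probability that $h(\gamma)$ equals some $h(d)$ with $d \in \D$ is $O(|\D|/n^{c}) = n^{-\Omega(1)}$, and by the same estimate $h$ is injective on $\D$ with high probability, so $M$ is well defined; a single query performs at most $|Q|$ lookups, so a union bound keeps the overall false-positive probability polynomially small.

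For the time bound I would parse $Q$ exactly as $S$ was parsed: slide a window of length $w$ over $Q$, maintaining the Karp--Rabin fingerprint $f$ of the current window by the usual $O(1)$ rolling update, and record the positions $t_1 < \dots < t_\ell$ at which $f \equiv 0 \pmod p$, i.e.\ the trigger-string occurrences lying inside $Q$. These positions cut $Q$ into the prefix $\alpha = Q[0..t_1 + w - 1]$, the blocks $\gamma_j = Q[t_j .. t_{j+1} + w - 1]$ for $1 \le j < \ell$, and the suffix $\beta = Q[t_\ell .. |Q| - 1]$ — with $\alpha$ (resp.\ $\beta$) taken to be empty when $Q$ already starts (resp.\ ends) with a trigger string, and with the whole of $Q$ returned unchanged when $\ell = 0$, as in the definition of partial encoding. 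To avoid recomputing each $h(\gamma_j)$ in time $\Theta(|\gamma_j|)$, I would let $h$ itself be a Karp--Rabin fingerprint and precompute the prefix fingerprints and the power table of $Q$ in $O(|Q|)$ time; then $h(\gamma_j)$ is obtained in $O(1)$ time for every $j$. The scan, the at most $|Q|$ table lookups, and the assembly of $\alpha$, the list of phrase IDs, and $\beta$ therefore all run in $O(|Q|)$ time.

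For correctness, first suppose $Q$ occurs in $S$, say $Q = S[k .. k + |Q| - 1]$. Because whether a length-$w$ window is a trigger string depends only on its contents, the trigger-string occurrences strictly inside $Q$ are precisely the trigger-string occurrences of $S$ that fall strictly inside $S[k .. k + |Q| - 1]$; hence the blocks $\gamma_1, \dots, \gamma_{\ell - 1}$ are exactly the phrases of $S$ completely contained in that occurrence, each of which belongs to $\D$. Since $h$ is injective on $\D$ with high probability, every lookup $M(h(\gamma_j))$ returns the rank of $\gamma_j$ in $\D$, so the query outputs $\alpha$, the sequence of these phrase IDs, and $\beta$, which is exactly the partial encoding of $Q$. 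Conversely, the query emits NULL only when some lookup $M(h(\gamma_j))$ returns NULL; for a block $\gamma_j$ that is genuinely not a phrase of $S$ this happens except with the polynomially small false-positive probability above, so whenever $Q$ fails to occur in $S$ \emph{because one of its blocks is not a phrase of $S$}, the structure reports NULL with high probability.

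I expect the crux to be exactly this last probabilistic point: making ``high probability'' precise and uniform over the whole family of strings a query can inspect — that is, showing that for the chosen range of $h$ the bad events ``$h$ is not injective on $\D$'' and ``$M(h(\gamma)) \ne \mathrm{NULL}$ for some non-phrase block $\gamma$ arising during the scan of $Q$'' together occur with probability $n^{-\Omega(1)}$ — and in pinning down the exact sense of the NULL guarantee, since a $Q$ all of whose blocks are genuine phrases but whose \emph{sequence} of blocks does not occur in $\P$ still yields a (well-formed) partial encoding here and is rejected only afterwards, by the backward search over the FM-index for $\P$; on the inputs that matter for this lemma, namely the $Q$ that do occur in $S$, every block lies in $\D$ and the one-sided error never fires. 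The remaining ingredients (the $O(|\D|)$-word static dictionary, the rolling-hash re-parse of $Q$, and the constant-time retrieval of each block's fingerprint from prefix fingerprints) are standard and contribute only routine bookkeeping.
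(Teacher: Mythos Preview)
Your proposal is correct and follows essentially the same approach as the paper's own proof: store Karp--Rabin fingerprints of the phrases in a constant-time hash table of size $O(|\D|)$ words, re-parse $Q$ with the same trigger rule, look up each complete block, and bound the false-positive probability by taking the hash range polynomial in $n$ (the paper fixes it to $n^3$, giving a per-phrase collision probability of at most $1/n^2$). You are in fact slightly more careful than the paper in two places --- you spell out the prefix-fingerprint trick for $O(1)$ per-block hash retrieval (the paper relies implicitly on the blocks' total length being $O(|Q|)$), and you explicitly flag the edge case where every block of $Q$ lies in $\D$ yet $Q$ does not occur in $S$, which the paper's closing ``otherwise, we return NULL with high probability'' glosses over.
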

\begin{proof}
We keep the Karp-Rabin (KR) hashes of the phrases in $\D$, with the range of the KR hash function mapping to $[1..n^3]$ so the hashes each fit in $O(\log n)$ bits.  We also keep a constant-time map (implemented as a hash table with a hash function that's perfect for the phrases in $\D$) from the KR hashes of the phrases in $\D$ to their IDs, that returns NULL given any value that is not a KR hash of a phrase in $\D$.  We set $M$ to be the map composed with the KR hash function.

Given $Q$, we scan it to find the trigger strings in it, and convert it into a sequence of substrings consisting of: 
(a) the prefix $\alpha$ of $Q$ ending at the right end of the first trigger string in $Q$; 
(b) a sequence of PFP phrases, each starting and ending with a trigger string with no trigger string in between; and 
(c) the suffix $\beta$ of $Q$ starting at the left end of the last trigger string in $Q$.

We apply $M$ to every complete phrase in (b).  
If $M$ returns NULL for any complete phrase in (b), 
then $Q$ is not a substring of $S$, 
so we return NULL; 
otherwise, we return $\alpha$, the sequence of phrase IDs $M$ returned for the phrases in (b), and $\beta$.

Notice that, if a phrase in $Q$ is in $S$, then $M$ will map it to its lexicographic rank in $\D$; otherwise, the probability the KR hash of any particular phrase in $Q$ but not in $\D$ collides with the KR hash of a phrase in $\D$, is at most $n / n^3 = 1 / n^2$.  
It follows that, if $Q$ is a substring of $S$, then we return $Q$'s partial encoding; otherwise, we return NULL with high probability.
\end{proof}

\begin{corollary}
If we allow $O(|Q|)$ query time with high probability, then we can modify $M$ to always report NULL for $Q$ not occurring in $S$. 
\end{corollary}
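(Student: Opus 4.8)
The plan is to turn the one-sided, probabilistic membership test of the preceding lemma into an exact one by \emph{verifying} every phrase ID that $M$ hands back. The only way a non-occurring $Q$ can slip through is a Karp--Rabin collision: some complete phrase $\gamma$ of $Q$ is not in $\D$, yet $h(\gamma)$ coincides with $h(d_i)$ for some $d_i\in\D$, so $M(h(\gamma))$ returns $i$ instead of NULL. Such a collision is detectable as soon as we can look at the actual characters of $d_i$, so I would augment the structure with the ability to recover, given an ID $i$, the string $d_i$ --- for instance by storing $\D$ explicitly, or by storing each phrase's length together with one position in $S$ at which it occurs and extracting it from a compact representation of $S$.

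Concretely, the query proceeds exactly as before: scan $Q$ with the rolling Karp--Rabin hash, split it into the prefix $\alpha$, the sequence of complete phrases in part (b), and the suffix $\beta$, and apply $M\circ h$ to each complete phrase. Whenever $M$ returns NULL we return NULL immediately. Whenever $M$ returns an ID $i$ for a complete phrase $\gamma$, we additionally compare $\gamma$ against $d_i$ character by character; if they differ we have caught a collision and return NULL, and otherwise we keep $i$. If all complete phrases survive this check, we output $\alpha$, the verified list of IDs, and $\beta$.

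For correctness, observe that after verification every ID we output is genuinely the lexicographic rank in $\D$ of the corresponding phrase of $Q$, so we never emit a spurious partial encoding; conversely, if $Q$ does occur in $S$ then each of its complete phrases already lies in $\D$, $M$ returns the correct ID, and the verification necessarily succeeds, so we still return the true partial encoding. Hence NULL is reported in exactly the cases the lemma intended, now with certainty. For the running time, the phrases of $Q$'s parse overlap only in their length-$w$ borders, so their combined length is $O(|Q|)$ when $w$ is treated as a constant; the extra verification therefore costs $O(|Q|)$ on top of the $O(|Q|)$ already spent, and the residual ``with high probability'' is inherited solely from the usual randomized guarantees on the Karp--Rabin parse and on the constant-time evaluation of $M$.

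The main obstacle I expect is keeping this cheap in space: the cleanest verification keeps $\D$ in full, whose size is the sum of the phrase lengths rather than merely the number of phrases, so to preserve the space bound of the lemma one must instead reconstruct $d_i$ from a compact representation of $S$ (e.g.\ the FM-index of $S$ together with $B$), and then it is the per-character extraction cost, not the collision check itself, that has to be argued to stay within $O(|Q|)$ with high probability. A secondary point to be careful about is the dependence on $w$ that is hidden inside ``$O(|Q|)$''.
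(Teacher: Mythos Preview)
Your proposal is correct and takes essentially the same approach as the paper: augment the hash table with the actual characters of each phrase in $\D$ and verify every returned ID by a character-by-character comparison, so that no spurious match can survive. The paper's proof is much terser; it simply says to store the characters alongside the hashes and does not worry about the space blowup you flag (the corollary as stated makes no space claim, so storing $\D$ in full is acceptable), and it adds a remark about recomputing the KR hashes and rebuilding the table on collision, which is a construction-time consideration you do not need for the argument you give.
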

\begin{proof}
We augment each Karp-Rabin (KR) hash stored in the hash table with the actual characters of its phrase such that we can check, character by character, whether a matched phrase of $Q$ is indeed in $\D$. In case of a collision we recompute the KR hashes of $\D$ and rebuild the hash table.
That is possible since we are free to choose different Karp-Rabin fingerprints for the phrases in $\D$.
\end{proof}

Continuing from our example above
where the trigger strings are {\tt AA}, {\tt CG} and {\tt TA},
suppose we have a given a query pattern $Q$,
\[Q [0..34] = \mathtt{CAGAAGAGTATCTCCTCGACATGTTGAAGACATAT}\,\]
we can compute the parse $Q$ to obtain the following
\[\mathtt{CAGAA}, \mathtt{AAGAGTA}, \mathtt{TATCTCCTCG}, \mathtt{CGACATGTTGAA}, \mathtt{AAGACATA}, \mathtt{TAT}.\]

Next, we use $M (h (\cdot))$ to map the complete phrases of this parse of $Q$ to their phrase IDs---which is their $\rank$ in $\D$. If any complete phrase maps to NULL then we know $Q$ does not occur in $T$. Using our example, we have the partial encoding
\[\mathtt{CAGAA}, 2, 4, 3, 1, \mathtt{TAT}.\]

Next, we consider all possible cases.  First, we consider the case that the last substring $\beta$ in our parse of $Q$ ends with a trigger string, which implies that it is a complete phrase.  Here, we can immediately start backward searching for the parse of $Q$ in the FM-index for $\P$.  Next, if $\beta$ is not a complete phrase then we backward search for $\beta$ in the FM-index for $S$.  If this backward search for $\beta$ returns nothing then we know $Q$ does not occur in $S$.  If the backward search for $\beta$ returns an interval in the \BWT{} of $\P$ that is not contained in the \BWT{} interval for a trigger string then $\beta$ does not start with a trigger string so $Q = \beta$ and we are done backward searching for $Q$.  

Finally, we consider the case when $\beta$ is a proper prefix of a phrase and the backward search for $\beta$ returns a $\BWT_S$ interval contained in the $\BWT_S$ interval for a trigger string. In our example, $\beta = \mathtt{TAT}$ and our backward search for $\beta$ in the FM-index for $S$ returns the interval $\BWT_S [31..32]$, which is the interval for the trigger string {\tt TA}.  Next,  we use $B$ to map the interval for $\beta$ in the $\BWT_S$ to the interval in the $\BWT_{\P}$ that corresponds to the cyclic shifts of $S$ starting with $\beta$.  

\begin{lemma}
We can store in space (in words) proportional to the number of distinct trigger strings in $S$ a data structure $B$ with which,
\begin{itemize}
\item given the lexicographic range of suffixes of $S$ starting with a string $\beta$ such that $\beta$ starts with a trigger string and contains no other trigger string, in $O(\log \log n)$ time we can find the lexicographic range of suffixes of $\P$ starting with phrases that start with $\beta$;
\item given a lexicographic range of suffixes of $\P$ such that the corresponding suffixes of $S$ all start with the same trigger string, in $O(\log \log n)$ time we can find the lexicographic range of those corresponding suffixes of $S$.
\end{itemize}
\end{lemma}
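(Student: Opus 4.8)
The plan is to instantiate $B$ as the bitvector of length $n$ already introduced --- with a $1$ at suffix-array position $i$ of $S$ exactly when the $i$-th lexicographically smallest rotation of $S$ begins with a trigger string --- but to store it in run-length-compressed form, and then to reduce each of the two queries to a single $\rank$ or $\select$ on $B$ followed by $O(1)$ arithmetic, using Lemma~\ref{lemma:lexordercorrespondence} for correctness.

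First I would pin down the run structure of $B$. All rotations of $S$ beginning with a fixed trigger string $\tau$ occupy a contiguous block of rows of the $\BWT$ matrix of $S$, since they share the length-$w$ prefix $\tau$; every row of that block carries a $1$, and the blocks for distinct trigger strings are disjoint. Hence $B$ has at most $t$ maximal runs of $1$s, where $t$ is the number of distinct trigger strings occurring in $S$. I would therefore represent $B$ implicitly by the sorted list of its $1$-runs --- each stored as its start position together with the count of $1$s preceding it --- inside a predecessor data structure over the universe $[0,n)$, e.g.\ a y-fast trie, which uses $O(t)$ words and answers predecessor/successor queries in $O(\log\log n)$ time. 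From this, $\rank_B(i)$ (on the positions $i$ we will actually query, which always lie inside a $1$-run) and $\select_B(j)$ are each one predecessor search plus a constant-time offset, so both cost $O(\log\log n)$.

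Next I would verify the two directions via the order-preserving bijection of Lemma~\ref{lemma:lexordercorrespondence} between rotations of $\P$ and rotations of $S$ that begin with a trigger string: the $j$-th such rotation of $S$ (the $j$-th $1$ of $B$, read top to bottom) corresponds to the $j$-th rotation of $\P$. For the first direction, if $[p,q]$ is the suffix-array range of $S$ for a $\beta$ that begins with a trigger string $\tau$ and contains no other trigger string, then $\beta$ cannot overrun the end of the phrase that starts the rotation (that would introduce a second trigger string), so $\beta$ is a prefix of that phrase; thus $[p,q]$ lies inside the $1$-run of $\tau$, and --- because $\D$ is prefix-free and lexicographically sorted --- the rotations of $S$ in $[p,q]$ biject, order-preservingly, with exactly the rotations of $\P$ whose first phrase begins with $\beta$ (a contiguous suffix-array range of $\P$, since the phrases beginning with $\beta$ form a contiguous ID-range in $\D$). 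That range is $[\rank_B(p),\rank_B(q)]$. For the second direction, if $[p,q]$ is a suffix-array range of $\P$ whose corresponding rotations of $S$ all begin with one trigger string $\tau$, then --- since every rotation of $S$ beginning with $\tau$ carries a $1$ and these $1$s form one run --- those rotations are precisely the $(p{+}1)$-th through $(q{+}1)$-th $1$s of $B$, i.e.\ the contiguous $S$-range $[\select_B(p{+}1),\select_B(q{+}1)]$.

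The main obstacle is not the mapping arithmetic but meeting the $O(\log\log n)$ bound inside $O(t)$ words: a plain sorted array of run boundaries only yields $O(\log t)$, so I would commit to an $o(\log n)$-time predecessor structure (y-fast trie or van Emde Boas over the polynomial-size universe). The remaining care is bookkeeping: reconciling the paper's $\rank$/$\select$ conventions, the $w$-character overlap between consecutive phrases, and the off-by-ones at run ends --- in particular checking that every position handed to $\rank_B$ indeed sits inside a $1$-run, which is exactly what the hypotheses on $\beta$ and on $\tau$ guarantee.
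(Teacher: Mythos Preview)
Your proposal is correct and follows essentially the same approach as the paper: define the same bitvector $B$, observe that it has at most one run of $1$s per distinct trigger string, store it run-length compressed with $\rank$/$\select$ support, and reduce the two directions to $\rank$ and $\select$ respectively (your formulas $[\rank_B(p),\rank_B(q)]$ and $[\select_B(p{+}1),\select_B(q{+}1)]$ match the paper's exactly), with correctness deferred to Lemma~\ref{lemma:lexordercorrespondence}. If anything, you are more careful than the paper: you make explicit that achieving $O(\log\log n)$ within $O(t)$ words needs a predecessor structure such as a y-fast trie rather than a sorted array, and you spell out why the $[p,q]$ you hand to $\rank_B$ always sits inside a single $1$-run --- details the paper simply asserts.
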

\begin{proof}
Let $B[0..n - 1]$ be a bitvector with 1s marking the lexicographic ranks of suffixes of $S$ starting with trigger strings.  There are at most as many runs of 1s in $B$ as there are distinct trigger strings in $S$, so we can store it in space proportional to that number and support rank and select operations on it in $O(\log \log n)$ time.

If $\BWT_S [i..j]$ contains the characters immediately preceding, in $S$, occurrences of a string $\beta$ that starts with a trigger string and contains no other trigger strings, then 
$\BWT_{\P} [B.\rank_1 (i)..B.\rank_1 (j)]$ contains the phrase IDs immediately preceding, in $\P$, the IDs of phrases starting with $\beta$.

If $\BWT_{\P} [i..j]$ contains the phrase IDs immediately preceding, in $\P$, suffixes of $\P$ such that the corresponding suffixes of $S$ all start with the same trigger string, then $\BWT_S [B.\select_1 (i + 1).. B.\select_1 (j + 1)]$ contains the characters immediately preceding the corresponding suffixes of $S$.

The correctness follows from Lemma~\ref{lemma:lexordercorrespondence}.
\end{proof}

Continuing with our example mapping $\BWT_S [31..32]$ yield the following interval:
\[\BWT_{\P} [B.\rank_1 (31), B.\rank_1 (32)] = \BWT_{\P} [4..5]\] as shown in Figure~\ref{fig:P_matrix}. Starting from this interval in $\BWT_{\P}$, we now backward search in the FM-index for $\P$ for the sequence of complete phrase IDs in the parse of $Q$.  In our example, we have the interval $\BWT_{\P} [4..5]$ which yields the following phrase IDs: $2\ 4\ 3\ 1$.

If this backward search in the FM-index for $\P$ returns nothing, then we know $Q$ does not occur in $S$.  Otherwise, it returns the interval in $\BWT_{\P}$ corresponding to cyclic shifts of $S$ starting with the suffix of $Q$ that starts with $Q$'s first complete phrase.  In our example, if we start with $\BWT_{\P} [4..5]$ and backward search for $2\ 4\ 3\ 1$ then we obtain $\BWT_{\P} [2]$, which corresponds to the cyclic shift
\[\mathtt{\tt AAGAGTATCTCCTCGACATGTTGAAGACATATGAT\$TCCAG}\]
of $S$ that starts with the suffix
\[\mathtt{AAGAGTATCTCCTCGACATGTTGAAGACATAT}\]
of $Q$ that is parsed into $2, 4, 3, 1, \mathtt{TAT}$.  

To finish our search for $Q$, we use $B$ to map the interval in $\BWT_{\P}$ to the corresponding interval
in the $\BWT_S$, which is the interval of rows in the \BWT{} matrix for $S$ which start with the suffix of $Q$ we have sought so far. In our example, we have that  $\BWT_{\P} [2]$ maps to  \[\BWT_S [B.\select_1 (2 + 1)] = \BWT_S [2]. \]
We note that our examples contain \BWT{} intervals with only one entry because our example is so small, but in general they are longer.  If the first substring $\alpha$ in our parse of $Q$ is a complete phrase then we are done backward searching for $Q$.  Otherwise, we start with this interval in $\BWT_S$ and backward search for $\alpha$ in the FM-index for $S$, except that we ignore the last $w$ last characters of $\alpha$ (which we have already sought, as they are also contained in the next phrase in the parse of $Q$). 

In our example, $\alpha = \mathtt{CAGAA}$ so, starting with $\BWT_S [2]$ we backward search for $\mathtt{CAG}$, which returns the interval $\BWT_S [14]$.  As shown in Figure~\ref{fig:T_matrix},
\[S [\SA [4]..n] = S [2..n] = \mathtt{CAGAAGAGTATCTCCTCGACATGTTGAAGACATATGAT\$}\]
does indeed start with
\[Q = \mathtt{CAGAAGAGTATCTCCTCGACATGTTGAAGACATAT}\,.\]  This concludes our explanation of {\tt count}.

To conclude, we give some intuition as to why we expect our  two-level FM-index to be faster in practice than standard backward search.  First, we note that standard backward search takes linear time in the length of $Q$ and usually uses at least one random access per character in $Q$.  Whereas, prefix-free parsing $Q$ takes linear time but does not use random access; backward search in the FM-index of $S$ is the same as standard backward search but we use it only for the first and last substrings in the parse of $Q$.  Backward search in the FM-index for $P$ is likely to use about $\lg |\D|$ random access for each complete phrase in the parse of $Q$: the \BWT{} of $\P$ is over an effective alphabet whose size is the number of phrases in $\D$.  Therefore, a balanced wavelet tree to support \rank{} on that \BWT{} should have depth about $\lg |\D|$ and we should use at most about one random access for each level in the tree.

In summary, if we can find settings of the prefix-free parsing parameters $w$ and $p$ such that
\begin{itemize}
\item most query patterns will span several phrases,
\item most phrases in those patterns are fairly long,
\item $\lg |\D|$ is significantly smaller than those phrases' average length,
\end{itemize}
then the extra cost of parsing $Q$ should be more than offset by using fewer random accesses. 
\section{Results} \label{sec:results}

We implemented our algorithm and measured its performance against all known competing methods.  We ran all experiments on a server with AMD EPYC 75F3 CPU with the Red Hat Enterprise Linux 7.7 (64bit, kernel 3.10.0). The compiler was g++ version 12.2.0. The running time and memory usage was recorded by SnakeMake benchmark facility~\cite{snakemake}.  We set a memory limitation of 128 GB of memory and a time limitation of 24 hours.

\subparagraph{Datasets.}
We used the following datasets. First, we considered sets of SARS-CoV-2 genomes taken from the NCBI website. We used three collections of $25,000$, $50,000$, and $100,000$ SARS-CoV-2 genomes from EMBL-EBI's COVID-19 data portal. Each collection is a superset of the previous. We denote these as {\tt SARS-25k}, and {\tt SARS-50k}, {\tt SARS-100k}. Next, we considered a single human reference genome, which we denote as {\tt GRCh38}, downloaded from NCBI. We report the size of the datasets as the number of characters in each in Table \ref{Tab:Construction-Usage}.  We denote $n$ as the number of characters.

\subparagraph{Implementation.}
We implemented our method in C++ 11 using the {\tt sdsl-lite} library~\cite{gbmp2014sea} and extended the prefix-free parsing method of Oliva, whose source code is publicly available here \url{https://github.com/marco-oliva/pfp}.  The source code for $\ours$ is available at \url{https://github.com/marco-oliva/afm}.

\begin{figure}[h]
     \centering
     \begin{subfigure}[b]{0.47\textwidth}
         \centering
         \includegraphics[width=\textwidth]{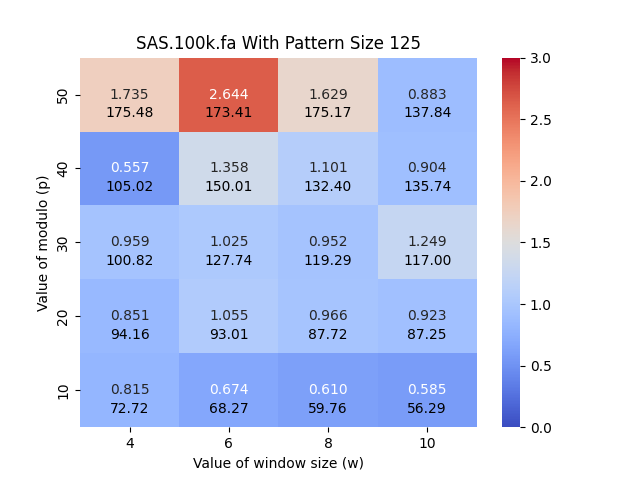}
     \end{subfigure}
     \begin{subfigure}[b]{0.47\textwidth}
         \centering
         \includegraphics[width=\textwidth]{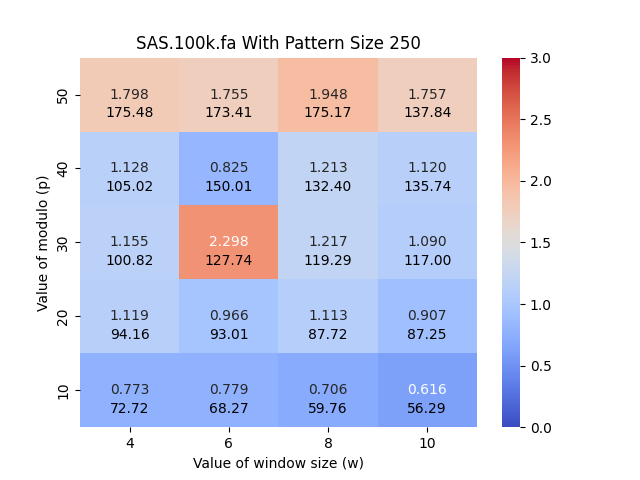}
     \end{subfigure}
     \\
     \begin{subfigure}[b]{0.47\textwidth}
         \centering
         \includegraphics[width=\textwidth]{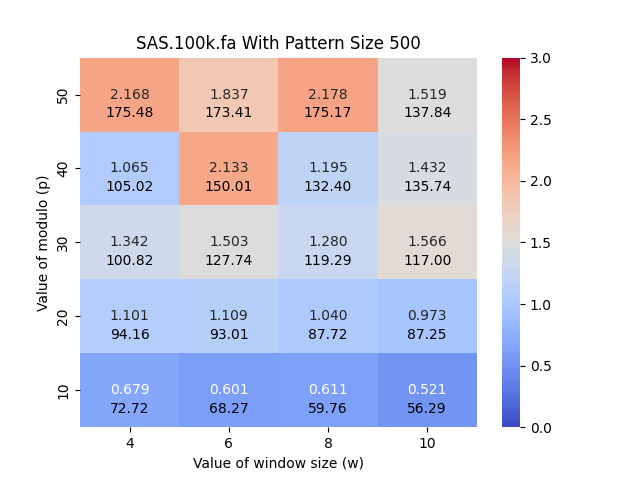}
     \end{subfigure}
     \begin{subfigure}[b]{0.47\textwidth}
         \centering
         \includegraphics[width=\textwidth]{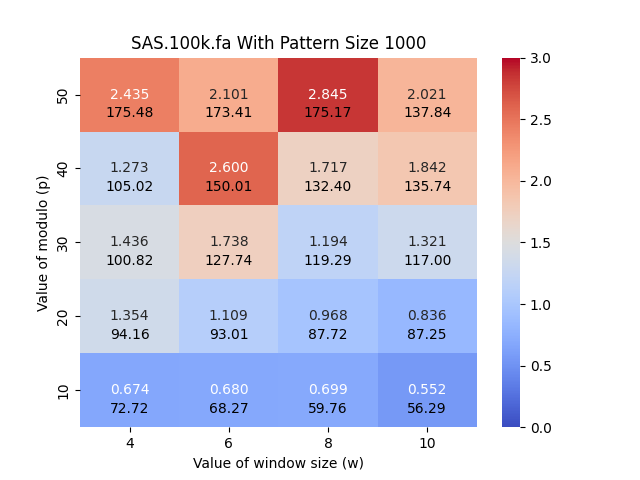}
     \end{subfigure}
        \caption{Illustration of the impact of $w$, $p$ and the length of the query pattern  on the acceleration of the FM-index. Here, we used {\tt SARS-100K} dataset and varied the length of the query pattern to be equal to 125, 250, 500, and 1000.  The y-axis corresponds to $p$ and the x-axis corresponds to $w$.  The heatmap illustrates the number of queries that can be performed in a CPU second with the acceleration versus the standard FM-index from {\tt sdsl}, i.e., $\ours$ / {\tt sdsl}. }
        \label{fig:accerlation}
\end{figure}

\subparagraph{Competing methods.}
We compared $\ours$ against the following methods the standard FM-index found in {\tt sdsl-lite} library~\cite{gbmp2014sea}, {\tt RLCSA}~\cite{siren2009compressed}, {\tt RLFM}~\cite{makinen2005succinct,makinen2004run}, and {\tt FIGISS} \cite{deng2022fm}.  We note that {\tt RLCSA} and {\tt FIGISS} have publicly-available source codes, while {\tt RLFM} is provided only as an executable. 
We performed the comparison by selecting 1,000 strings from the input file at random of the specified length, performing the {\tt count} operation on each query pattern, and measuring the time usage for all the methods under consideration. It is worth noting that {\tt FIGISS} and {\tt RLCSA} only support {\tt count} queries where the string is provided in an input text file. More specifically, the original {\tt FIGISS} implementation supports counting with the entire content of a file treated as a single pattern. To overcome this limitation, we modified the source code to enable the processing of multiple query patterns within a single file. In addition to the time consideration for {\tt count}, we measured the time and memory required to construct the data structure. 

\subsection{Acceleration versus Baseline}

In this subsection, we compare $\ours$ versus the standard FM-index in {\tt sdsl} with varying values of window size ($w$) and modulo value ($p$), and varying the length of the query pattern. We calculated the number of {\tt count} queries that were able to be performed in CPU second with $\ours$ versus the standard FM-index. We generated heatmaps that illustrate the number of {\tt count} queries of $\ours$ verses {\tt sdsl} for various lengths of query patterns, namely, 125, 250, 500, and 1,000. We performed this for each SARS-CoV-2 set of genomes. Figure \ref{fig:accerlation} shows the resulting heatmaps for {\tt SARS-100K}.  As shown in this figure, $\ours$ was between 2.178 and 2.845 times faster than the standard FM-index with the optimal values of $w$ and $p$.  In particular, an optimal performance gain of 2.6, 2.3, 2.2, and 2.9 was witnessed for pattern lengths of  125, 250, 500, and 1,000, respectively.  The $(w, p)$ pairs that correspond to these results are $(6,50)$, $(6,30)$, $(8,50)$, and $(8,50)$.  

Similar results were witnessed for {\tt SARS-25K} and  {\tt SARS-50K}.  For {\tt SARS-25K}, the optimal acceleration for pattern lengths of 125, 250, and 500 were 1.508, 2.261, 4.213, and 5.467, respectively (see Figure 6 in the Appendix).   For {\tt SARS-50K}, the optimal acceleration for pattern lengths of 125, 250, and 500 were 1.288, 2.984, 3.619, and 4.835, respectively (see Figure 7 in the Appendix).
For {\tt SARS-50K}, the optimal acceleration for pattern lengths of 125, 250, and 500 were 1.288, 2.984, 3.619, and 4.835, respectively (see Figure 7 in the Appendix). For {\tt GRCh38}, the optimal acceleration for pattern lengths of 125, 250, and 500 were 1.963, 1.805, 2.492, and 2.843, respectively (see Figure 8 in the Appendix).These results will guide our selection of $w$ and $p$ in the subsequent experiments.

\begin{figure}[htp]
    \includegraphics[width=.33\textwidth]{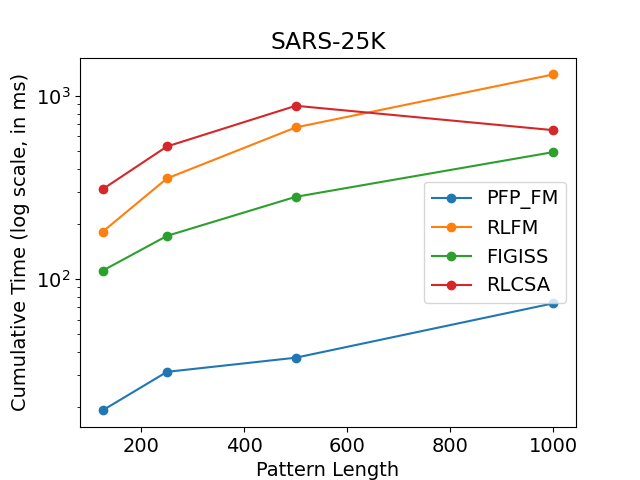}\hfill
    \includegraphics[width=.33\textwidth]{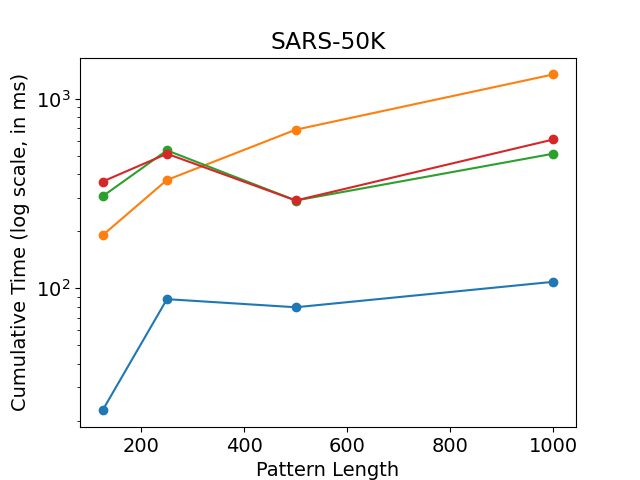}\hfill
    \includegraphics[width=.33\textwidth]{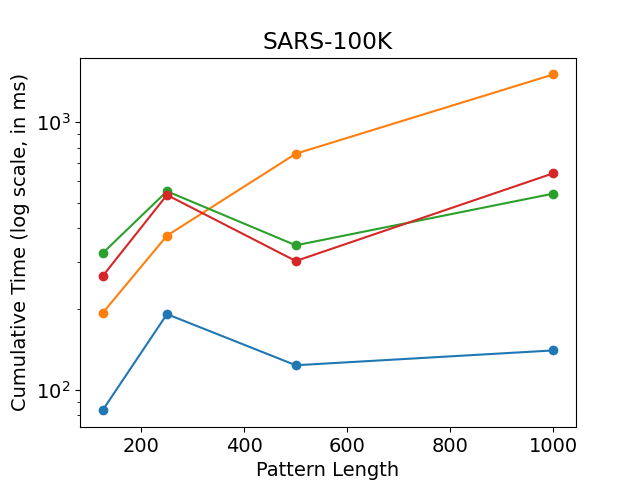}
     \caption{Illustration of the impact of the dataset size, and the length of the query pattern on the query time for answering {\tt count}. We vary the length of the query pattern to be equal to 125, 250, 500, and 1000, and report the times for {\tt SARS-25K}, {\tt SARS-50K}, and {\tt SARS-100K}.    We illustrate the cumulative time required to perform 1,000 {\tt count} queries. The y-axis is in $\log$ scale.  }
        \label{fig:count}
\end{figure}

\subsection{Results on SARS-CoV-2 Genomes}

We used the optimal parameters that were obtained from the previous experiment for this section. We constructed the index using these parameters for each SARS-CoV-2 dataset and assessed the time consumption for performing 1,000 {\tt count} queries using all competing methods and $\ours$.  We illustrate the result of this experiment in Figure \ref{fig:count}. It is clear from this $\ours$ consistently exhibits the lowest time consumption and a gradual, stable trend. For the {\tt SARS-25K} dataset, the time consumption of {\tt FIGISS} was between 451\% and 568\% higher than our method. And the time consumption of  {\tt RLCSA} and {\tt RLFM} was between 780\% and 1598\%, and 842\% and 1705\% more than $\ours$, respectively.  The performance of {\tt FIGISS} surpasses that of {\tt RLFM} and {\tt RLCSA} when using the SARS-25k dataset; however for the larger datasets {\tt FIGISS} and {\tt RLCSA} converge in their performance  .Neither method was substantially better than the other.  In addition, on the larger datasets, when the query pattern length was 125 and 250, {\tt RLFM} performed better than {\tt RLCSA} and {\tt FIGISS} but was slower for the other query lengths.   Hence, it is very clear that $\ours$ accelerates the performance of {\tt count} over all state-of-the-art methods. 

The gap in performance between $\ours$ and the competing methods increased with the dataset size.  For {\tt SARS-50K}, {\tt FIGISS}, {\tt RLCSA} and {\tt RLFM} were between 3.65 and 13.44, 3.65 and 16.08, and 4.25 and 12.39 times slower, respectively.
For {\tt SARS-100K}, {\tt FIGISS}, {\tt RLCSA} and {\tt RLFM} were between 2.81 and 3.86, 2.45 and 4.59, and 1.96 and 10.75 times slower, respectively.  

Next, we consider the time and memory required for construction; which is given in  Table \ref{Tab:Construction-Usage}. Our experiments revealed that all methods used less than 60 GB of memory on all SARS-CoV-2 datasets;  $\ours$ used the most memory with the peak being 54 GB on the {\tt SARS-100K} dataset. Yet, $\ours$ exhibited the most efficient construction time across all datasets for generating the FM-index, and this gap in the time grew with the size of the dataset.  More specifically, for the {\tt SARS-100K} dataset, $\ours$ used 71.04\%, 65.81\%, and 73.41\% less time compared to other methods. In summary, $\ours$ significantly accelerated the {\tt count} time, and had the fastest construction time. All methods used less than 60 GB, which is available on most commodity servers.

\subsection{Results on Human Reference Genome}
After measuring the time and memory usage required to construct the data structure across all methods using the {\tt GRCh38} dataset, we observed that $\ours$ exhibited has the second most efficient construction time but used the most construction space (71 GB vs. 26 GB to 45 GB). More specifically, $\ours$ was able to construct the index between 1.25 and 1.6 times faster than the {\tt FIGISS} and {\tt RLFM}.
\begin{figure}[h]
  \begin{minipage}[c]{0.6\textwidth}
     \includegraphics[width=\textwidth]{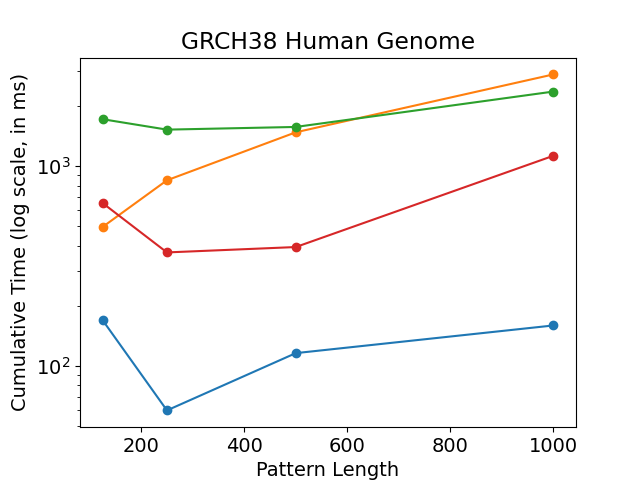}
\end{minipage}
  \begin{minipage}[c]{0.35\textwidth}

        \caption{Comparison of query times for \texttt{count} between the described solutions when varying the length of the query pattern. For each pattern length equal to 125, 250, 500, and 1000, we report the times for the {\tt GRCH38} dataset.    We plot the cumulative time required to perform 1,000 {\tt count} queries. The y-axis is in $\log$ scale. $\ours$ is shown in blue,  {\tt RLFM} is shown in orange, {\tt RLFM} is shown in red, and {\tt FIGISS} is shown in green.
        }
        \label{fig:human}
        \end{minipage}
\end{figure}

Next, we compare the performance of $\ours$ against other methods by performing 1,000 {\tt count} queries on, and illustrate the results in Figure \ref{fig:human}. Our findings demonstrate that $\ours$ consistently outperforms all other methods. Although {\tt RLCSA} shows better performance than {\tt RLFM} and {\tt FIGISS} when the pattern length is over 125 but is still 3.9, 6.2, 3.4, and 7.1 times slower than $\ours$. Meanwhile, the {\tt RLFM} method exhibits a steady increase in time usage, and it is 2.9, 14.2, 12.8, and 18.07 times slower than $\ours$. It is worth noting that the {\tt FIGISS} grammar is less efficient for non-repetitive datasets, as demonstrated in the research by Akagi et al.~\cite{Akagi2021GrammarIB}, which explains its (worse) performance on {\tt GRCh38} versus the  {\tt SARS-100K} dataset. Hence, {\tt FIGISS} is 10.1, 25.5, 13.6, and 14.8 times slower than $\ours$. These results are inline with the performance of our previous results, and demonstrate that $\ours$ has both competitive construction memory and time, and achieves a significant acceleration.

\begin{table}[h!]
\begin{center}
\begin{tabular}{ |c|c|c|c|c| } 
\hline
Dataset & $n$ & Method & Construction Memory (GB) & Construction Time (s) \\
\hline
\hline
\multirow{4}{6em}{{\tt SARS-25k}} & \multirow{4}{6em}{751,526,774} & {\tt RLCSA} &9.90  &322.85 \\ 
& &{\tt RLFM} &3.47  &363.74 \\ 
& &{\tt FIGISS} &4.89  &378.49 \\ 
& &{\tt \ours} &12.99  &117.29 \\
\hline
\multirow{4}{6em}{{\tt SARS-50k}} & \multirow{4}{6em}{1,503,252,577} &  {\tt RLCSA} &19.88  &679.89 \\ 
& & {\tt RLFM} &6.94  &701.36 \\ 
& &{\tt FIGISS} &12.44  &795.70 \\ 
& & {\tt \ours} &26.12  &233.04 \\
\hline
\multirow{4}{6em}{{\tt SARS-100k}} & \multirow{4}{6em}{3,004,588,730} &  {\tt RLCSA} &39.47  &1690.22 \\ 
& &{\tt RLFM} &25.01  &1432.16 \\ 
& &{\tt FIGISS} &25.57  &1840.80 \\ 
& & {\tt \ours} &53.90  &489.45 \\
\hline
\multirow{4}{6em}{{\tt GRCh38}} & \multirow{4}{6em}{3,189,750,467} &  {\tt RLCSA} &45.45  &924.60 \\ 
& & {\tt RLFM} &26.31  &1839.25 \\ 
& & {\tt FIGISS} &34.65  &1440.19 \\ 
& & {\tt \ours} &71.13  &1154.12 \\
\hline
\end{tabular}
\caption{Comparison of the construction performance with the construction time and memory for all datasets.  The number of characters in each dataset (denoted as $n$) is given in the second column.  The time is reported in seconds (s), and the memory is reported in gigabytes (GB). }
\label{Tab:Construction-Usage}
\end{center}
\end{table}

\section{Conclusion} 

In this work, we presented $\ours$ that shows significant acceleration over existing state-of-the-art methods.  Hence, this work begins to resolve a relatively long-standing issue in data structures as to how we can parse input that has no natural word boundaries in a manner that enables acceleration of the FM-index.  We note that it is possible to similarly augment \texttt{locate} queries since for that we need the suffix array samples only in the final step when matching $\alpha$ (or $\beta$ in case that $Q = \beta$), which can be done by the usually suffix array samplings for the FM-index. If $\alpha$ is empty, then we can instead match the first block of the pattern with the FM-index on $S$ and not on $\P$.  We leave this for future work.  With respect to practical applications, as reads are getting longer and more accurate, we will soon see an opportunity to apply accelerations of finding patterns that have length between 125 and 1,000. Hence, a larger area that warrants future consideration is accelerating the backward search with approaches such as $\ours$ for aligning Nanopore reads to a database.  Our last experiment shows significant acceleration with query patterns of length 1,000 to a full human reference genome, giving proof that the research community is in the position to begin such an endeavour.

\newpage
\appendix

\onecolumn
\section{Appendix}

\begin{figure}[h]
     \centering
     \begin{subfigure}[b]{0.47\textwidth}
         \centering
         \includegraphics[width=\textwidth]{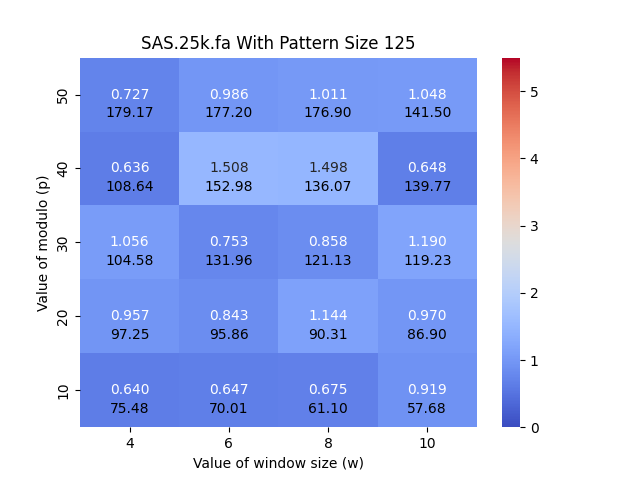}
     \end{subfigure}
     \begin{subfigure}[b]{0.47\textwidth}
         \centering
         \includegraphics[width=\textwidth]{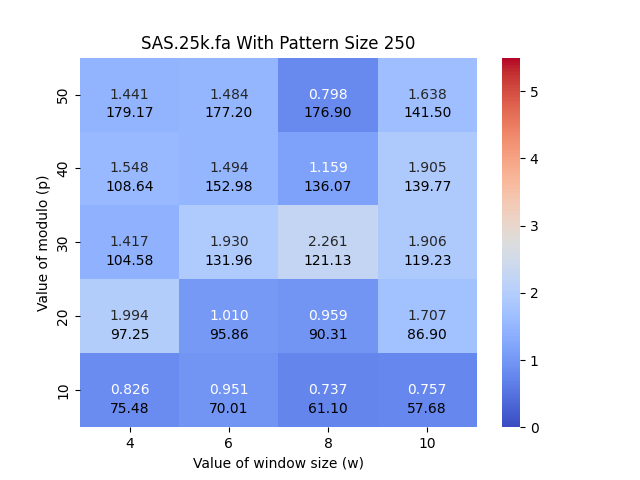}
     \end{subfigure}
     \\
     \begin{subfigure}[b]{0.47\textwidth}
         \centering
         \includegraphics[width=\textwidth]{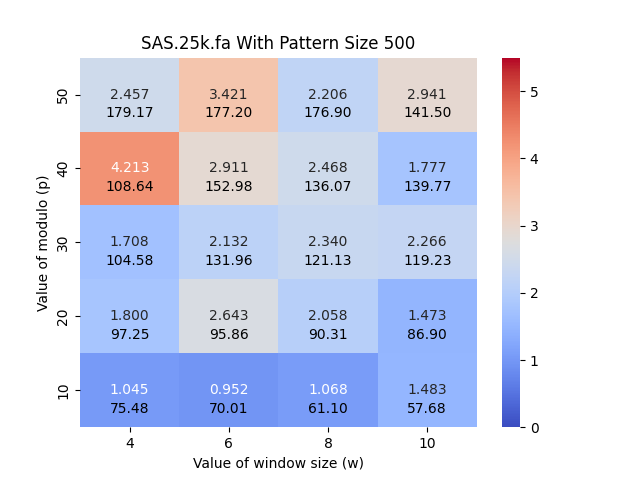}
     \end{subfigure}
     \begin{subfigure}[b]{0.47\textwidth}
         \centering
         \includegraphics[width=\textwidth]{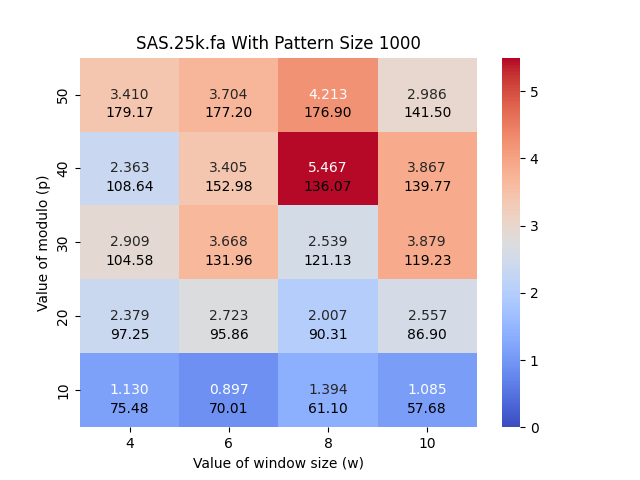}
     \end{subfigure}
        \caption{Illustration of the impact of $w$, $p$ and the length of the query pattern  on the acceleration of the FM-index. Here, we used {\tt SARS-25K} dataset and varied the length of the query pattern to be equal to 125, 250, 500, and 1000.  The y-axis corresponds to $p$ and the x-axis corresponds to $w$.  The heatmap illustrates the number of queries that can be performed in a CPU second with the acceleration verses the standard FM-index from {\tt sdsl}, i.e., $\ours$ / {\tt sdsl}. }
        
\end{figure}

\begin{figure}[h]
     \centering
     \begin{subfigure}[b]{0.47\textwidth}
         \centering
         \includegraphics[width=\textwidth]{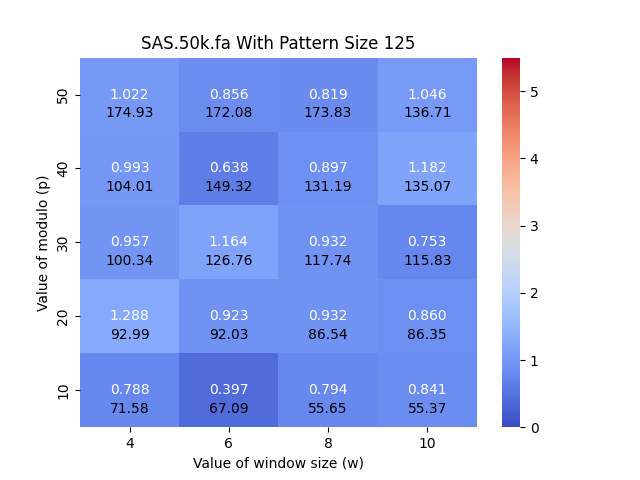}
     \end{subfigure}
     \begin{subfigure}[b]{0.47\textwidth}
         \centering
         \includegraphics[width=\textwidth]{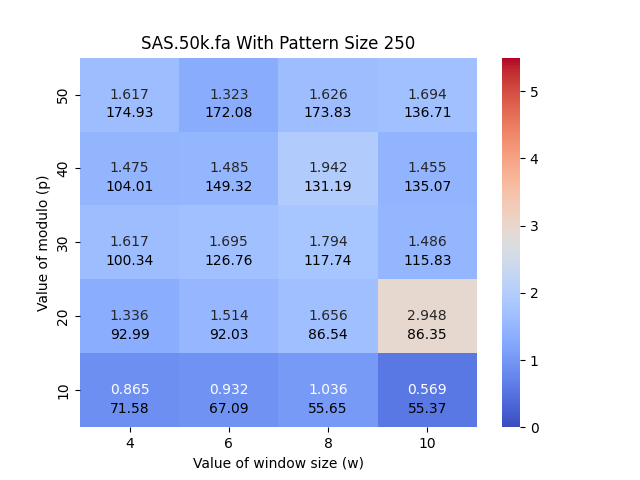}
     \end{subfigure}
     \\
     \begin{subfigure}[b]{0.47\textwidth}
         \centering
         \includegraphics[width=\textwidth]{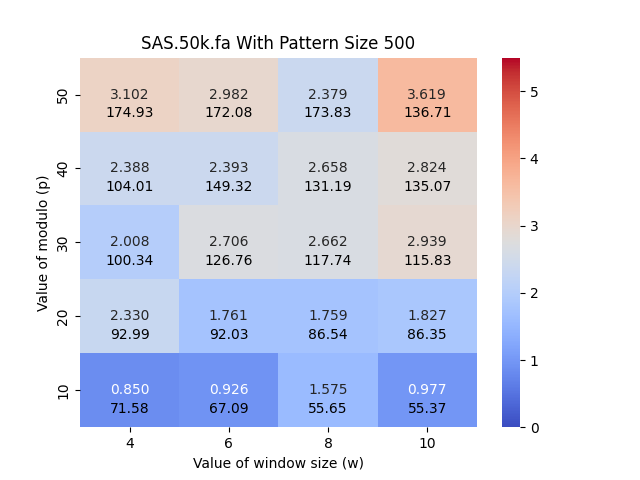}
     \end{subfigure}
     \begin{subfigure}[b]{0.47\textwidth}
         \centering
         \includegraphics[width=\textwidth]{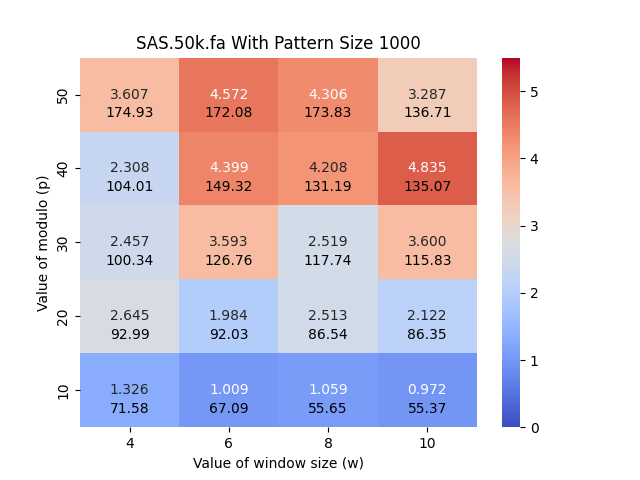}
     \end{subfigure}
        \caption{Illustration of the impact of $w$, $p$ and the length of the query pattern  on the acceleration of the FM-index. Here, we used {\tt SARS-50K} dataset and varied the length of the query pattern to be equal to 125, 250, 500, and 1000.  The y-axis corresponds to $p$ and the x-axis corresponds to $w$.  The heatmap illustrates the number of queries that can be performed in a CPU second with the acceleration verses the standard FM-index from {\tt sdsl}, i.e., $\ours$ / {\tt sdsl}. }
\end{figure}

\begin{figure}[h]
     \centering
     \begin{subfigure}[b]{0.47\textwidth}
         \centering
         \includegraphics[width=\textwidth]{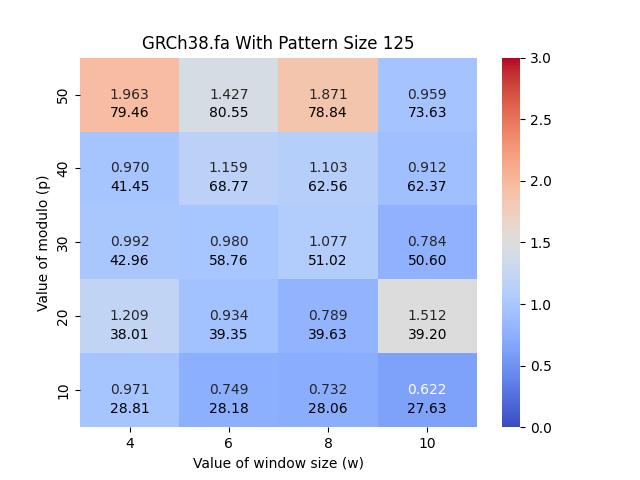}
     \end{subfigure}
     \begin{subfigure}[b]{0.47\textwidth}
         \centering
         \includegraphics[width=\textwidth]{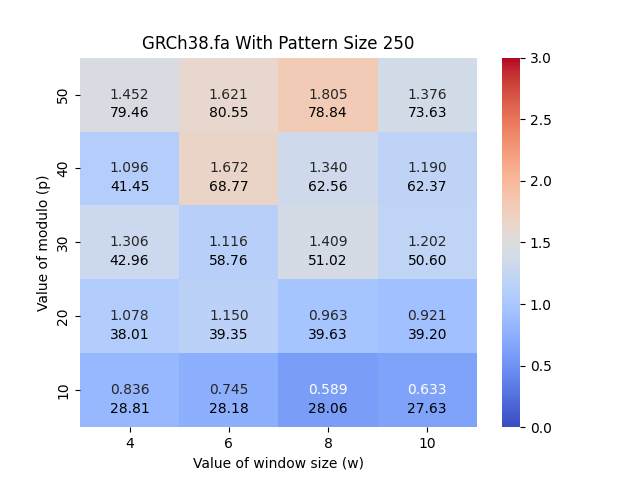}
     \end{subfigure}
     \\
     \begin{subfigure}[b]{0.47\textwidth}
         \centering
         \includegraphics[width=\textwidth]{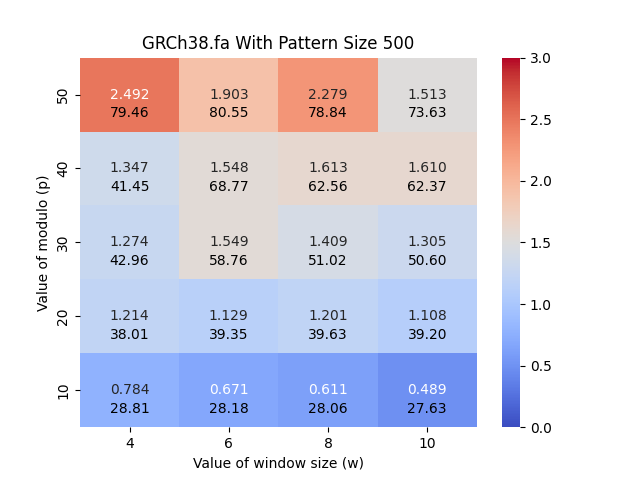}
     \end{subfigure}
     \begin{subfigure}[b]{0.47\textwidth}
         \centering
         \includegraphics[width=\textwidth]{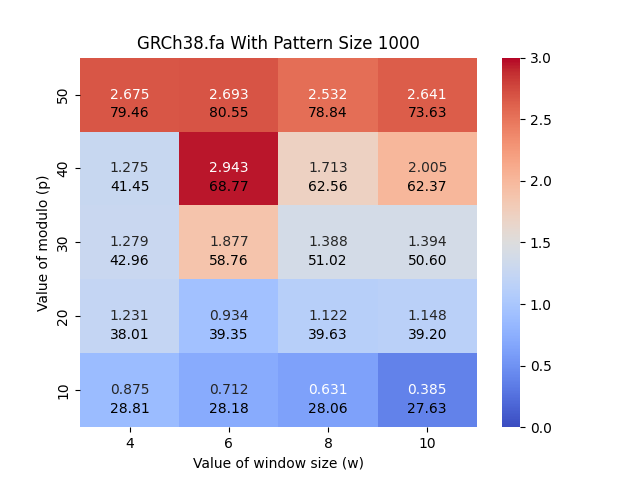}
     \end{subfigure}
        \caption{Illustration of the impact of $w$, $p$ and the length of the query pattern  on the acceleration of the FM-index. Here, we used {\tt GRCh38} dataset and varied the length of the query pattern to be equal to 125, 250, 500, and 1000.  The y-axis corresponds to $p$ and the x-axis corresponds to $w$.  The heatmap illustrates the number of queries that can be performed in a CPU second with the acceleration verses the standard FM-index from {\tt sdsl}, i.e., $\ours$ / {\tt sdsl}. }
\end{figure}
\end{document}